\begin{document}

\mainmatter

\title{Accelerating true orbit pseudorandom number generation using Newton's method}

\titlerunning{Accelerating true orbit pseudorandom number generation using Newton's method}

\author{Asaki Saito\inst{1} \and Akihiro Yamaguchi\inst{2}}

\authorrunning{Asaki Saito and Akihiro Yamaguchi}

\tocauthor{Asaki Saito and Akihiro Yamaguchi}

\institute{Future University Hakodate, 116-2 Kamedanakano-cho, Hakodate, Hokkaido 041-8655, Japan,\\
\email{saito@fun.ac.jp}
\and
Fukuoka Institute of Technology, 3-30-1 Wajiro-higashi, Higashi-ku, Fukuoka 811-0295, Japan,\\
\email{aki@fit.ac.jp}}

\maketitle

\begin{abstract}
The binary expansions of irrational algebraic numbers can serve as
high-quality pseudorandom binary sequences.
This study presents an efficient method for computing the exact binary
expansions of real quadratic algebraic integers using Newton's method.
To this end, we clarify conditions under which the first $N$ bits
of the binary expansion of an irrational number match those of its
upper rational approximation.
Furthermore, we establish that the worst-case time complexity of
generating a sequence of length $N$ with the proposed method is
equivalent to the complexity of multiplying two $N$-bit integers,
showing its efficiency compared to a previously proposed true orbit
generator.
We report the results of numerical experiments on computation time and
memory usage, highlighting in particular that the proposed method
successfully accelerates true orbit pseudorandom number generation.
We also confirm that a generated pseudorandom sequence
successfully passes all the statistical tests included in RabbitFile
of TestU01.
\keywords{Newton's method, algebraic integer, binary expansion, pseudorandom number, true orbit}
\end{abstract}

\section{Introduction}\label{sec:Introduction}

A pseudorandom number generator having good statistical properties
despite having a low computational cost is not only useful for various
applications such as simulation, numerical analysis, and secure
communications (see, e.g., \cite[Chapter 3]{Knuth}).
It is also interesting as an object of theoretical study since
the lower the memory usage and computation time, the
more difficult it usually becomes to generate pseudorandom sequences
having good statistical properties.
It does not seem true, however, that one only has to aim to develop a
generator having as low computational cost as possible that still
passes a certain set of standard statistical tests.
For example, a generator for evaluating an empirical statistical test
needs to have a higher level of statistical quality than that is
needed to merely pass some standard tests, as we shall describe below.

For empirical testing of a pseudorandom number generator, statistical
testing packages which consist of several standard tests are widely
used.
Some of well-known packages are DIEHARD \cite{Marsaglia}, NIST
statistical test suite \cite{NIST}, and TestU01 \cite{L'Ecuyer}.
Due to implementational or theoretical errors, however, several tests
in some packages have been reported to have defects (see, e.g.,
\cite{Brown} concerning DIEHARD and \cite{Hamano,HamanoKaneko,Okutomi}
concerning NIST statistical test suite).
This implies that it is desirable to verify the correctness of tests
in a package preferably beforehand.
However, if a pseudorandom number generator of marginal quality is
used for empirical verification of a test, then it is difficult to
distinguish whether the test or the generator is defective, even if
many failures are obtained by applying the test to the generator.
In order to avoid such a confusion, one needs to use a high-quality
pseudorandom number generator for the verification of a test.

We have devised pseudorandom number generators ---{\bf true orbit generators}--- using true orbits of
the Bernoulli map on irrational algebraic integers \cite{SaitoChaos2016,SaitoChaos2018}, in order to
generate pseudorandom sequences having as good statistical properties
as possible even if such generation increases the computational cost
to some extent.
Due to a countably infinite number of their possible states, the
true orbit generators can produce
nonperiodic sequences, which is in contrast with standard
generators:
Since usual generators have a finite number of possible states, they
can generate only (eventually) periodic sequences.
Other than the nonperiodicity, there are several mathematical supports
for the high statistical quality of the true orbit generators:
According to ergodic theory, the Bernoulli map can generate ideal
random binary sequences \cite{Billingsley}.
Also, Borel's conjecture \cite{Borel}, stating that every irrational
algebraic number is normal, is widely believed to be true in the field
of number theory.
Moreover, for an integer $b \ge 2$, the base-$b$ expansion of any
irrational algebraic number cannot have a regularity so simple that it
can be generated by a finite automaton \cite{Adamczewski} or by a
deterministic pushdown automaton \cite{Adamczewski2}.
However, the computational cost of generating pseudorandom sequences
using the true orbit generators is very
high.
Indeed, the time complexity of these generators is $O(N^2)$,
where $N$ is the length of a pseudorandom sequence to be generated, as
we observe in Appendix \ref{sec:AppendixComplexityTrueOrbitGenerators}.
Therefore, it is impractical to use these generators to generate a
long pseudorandom sequence on low-performance computers, such as those
having a slow CPU or limited memory.

In order to overcome this difficulty, in this paper, we employ
Newton's method, a technique for producing successively better
approximations to the roots of a function, to accelerate the true
orbit generator of \cite{SaitoChaos2016}.
This involves obtaining the exact binary expansion of a true root
(i.e., an algebraic integer of degree $2$) $\alpha$ from its
approximation $x$, which includes an error.
We establish a sufficient condition ensuring that the first $N$ bits
of the binary expansions of $\alpha$ and $x$ match, thereby ensuring
the generation of the same pseudorandom sequence as the true orbit
generator.
Furthermore, we demonstrate that the worst-case time complexity for
generating a sequence of length $N$ using the method proposed in this
study is equivalent to that of multiplying two $N$-bit integers,
showing its efficiency compared to the original generator with
$O(N^2)$ time complexity.
We also confirm, through numerical experiments, that the
acceleration of true orbit pseudorandom number generation has indeed
been achieved, and that a generated pseudorandom sequence of length
$2^{36} - 2$ successfully passes all the statistical tests included in
RabbitFile of TestU01.

\section{Preliminaries}\label{sec:Preliminaries}

In this section, we provide some definitions and results from
references \cite{SaitoChaos2016,SaitoPreprint}.

A complex number is called an {\bf algebraic integer} if it is a root of a
monic polynomial with (rational) integer coefficients (see, e.g.,
\cite[Chapter 5]{Hecke} for an explanation of algebraic integers).
If this polynomial is irreducible, then the degree of the algebraic integer
matches the degree of the polynomial.
We call an algebraic integer of degree $2$ a {\bf quadratic algebraic
integer}.
Each quadratic algebraic integer has a minimal polynomial of the form
$x^2+bx+c$ with $(b, c) \in {\mathbb Z}^2$.

We now define two sets, ${S}$ and $\bar{S}$, as well as a map $\pi$
from $\bar{S}$ to ${S}$.
We denote by ${S}$ the set of all quadratic algebraic integers in the
open unit interval $(0,1)$.
We denote by $\bar{S}$ the set of all $(b, c) \in {\mathbb Z}^2$
satisfying either $c>0$ and $1+b+c<0$, or $c<0$ and $1+b+c>0$.
If $(b, c) \in \bar{S}$, then $f(x) := x^2 + bx + c$ has exactly one
root of multiplicity 1 in the open unit interval $(0, 1)$ since
$\textrm{sgn} ~f(0) \neq \textrm{sgn} ~f(1)$.
Denoting this root by $\alpha$, it is obvious that $\alpha \in {S}$.
We denote by $\pi$ the map from $\bar{S}$ to $S$ that assigns to each
$(b, c) \in \bar{S}$ the unique $\alpha \in {S}$ that is a root of
$x^2+bx+c$.
This $\pi$ is a bijection.

In \cite{SaitoChaos2016}, it is proposed to use the binary sequence
$\left\{ b_i \right\}_{i=1,2,\dots}$ ($b_i \in \left\{0,\,1\right\}$)
obtained from the binary expansion $\alpha = \sum_{i=1}^{\infty} b_i
2^{-i}$ of $\alpha \in {S}$ as a pseudorandom binary sequence.
Since $\alpha \in {S}$ is irrational, it ensures that $\left\{ b_i
\right\}_{i=1,2,\dots}$ is a nonperiodic sequence.
As mentioned in the introduction, besides the nonperiodicity, there
are several mathematical supports for the high statistical quality of
$\left\{ b_i \right\}_{i=1,2,\dots}$ as a (pseudo-) random binary
sequence.

For any integer $b$ satisfying either $b \ge 1$ or $b \le -3$, we let
\begin{align}\label{eq:SeedSet}
\begin{split}
\bar{I}_b &:= \left\{
\begin{array}{ll}
\left\{(b, c) \in {\mathbb Z}^2 \mid  -b \le c \le -1\right\} &   \textrm{~  if~} b \ge 1, \\
\left\{(b, c) \in {\mathbb Z}^2 \mid  1 \le c \le -b-2\right\} &   \textrm{~  if~} b \le -3,
\end{array}\right.\\
{I}_b &:= \pi \left(\bar{I}_b\right) := \left\{ \pi(b, c) \mid (b, c) \in \bar{I}_b \right\}.
\end{split}
\end{align}
Note that $\bar{I}_b \subset \bar{S}$ and ${I}_b \subset {S}$.
The set ${I}_b$ (equivalently $\bar{I}_b$) is introduced as a set of
seeds for the true orbit generator of \cite{SaitoChaos2016}.
This ${I}_b$ has two properties desirable for a set of seeds, as
described in the following results.

\begin{proposition}[\cite{SaitoChaos2016,SaitoPreprint}]\label{Prop:QuadraticUniformity}
The elements of $I_{b}$
are distributed almost uniformly in the unit interval for sufficiently
large $|b|$.
\end{proposition}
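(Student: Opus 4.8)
The plan is to make the phrase ``distributed almost uniformly'' precise through discrepancy: I would show that for every subinterval $[p,q] \subseteq (0,1)$ the proportion of elements of $I_b$ lying in $[p,q]$ differs from its length $q-p$ by a quantity that is $O(1/|b|)$ uniformly in $p$ and $q$, so that the discrepancy of $I_b$ tends to $0$ as $|b| \to \infty$.

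First I would parametrize the roots. For $(b,c) \in \bar{I}_b$, writing $\alpha = \pi(b,c)$ and using $\alpha^2 + b\alpha + c = 0$ yields the key relation
\[
  c = -\alpha(\alpha + b) =: g(\alpha), \qquad g'(\alpha) = -(2\alpha + b).
\]
On $(0,1)$ the sign of $g'$ is constant: $g$ is strictly decreasing when $b \ge 1$ and strictly increasing when $b \le -3$ (since then $|b| \ge 3 > 2\alpha$), so $g$ is a strictly monotone bijection between $\alpha \in (0,1)$ and $c$ in the open interval with endpoints $g(0)=0$ and $g(1)=-(1+b)$. Running $c$ through the integers in this range recovers exactly $\bar{I}_b$; in particular the number of seeds is $N_b := \# I_b = b$ for $b \ge 1$ and $N_b = |b| - 2$ for $b \le -3$, so in both cases $N_b = |b| + O(1)$.

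The crux is the count in a subinterval. By strict monotonicity, $\alpha \in [p,q]$ holds if and only if $c = g(\alpha)$ lies between $g(p)$ and $g(q)$, an interval of length
\[
  |g(q) - g(p)| = (q - p)\,|b + p + q|.
\]
Since the number of integers in an interval differs from its length by at most $1$, the number of $\alpha \in I_b$ with $p \le \alpha \le q$ equals $(q-p)\,|b+p+q| + O(1)$. Using $|b + p + q| = |b| + O(1)$ and $N_b = |b| + O(1)$, a short estimate shows
\[
  \left| \frac{\#\{\alpha \in I_b : p \le \alpha \le q\}}{N_b} - (q - p) \right| \le \frac{C}{|b|}
\]
for an absolute constant $C$, uniformly over $0 \le p < q \le 1$. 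This is precisely the statement that the discrepancy of $I_b$ is $O(1/|b|)$, which tends to $0$, establishing almost-uniform distribution.

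The main obstacle is bookkeeping rather than conceptual: I must treat the two sign regimes ($b \ge 1$ and $b \le -3$) separately when identifying which root lies in $(0,1)$ and when fixing the direction of monotonicity of $g$, and I must control the off-by-one error from counting integers in the range between $g(p)$ and $g(q)$ carefully enough that, after dividing by $N_b$, it contributes only $O(1/|b|)$ to the discrepancy uniformly in $p$ and $q$. Once this uniform error bound is secured, the conclusion is immediate.
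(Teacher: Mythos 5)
Your proposal is correct. Note that the paper itself states this proposition without proof, importing it from the cited references, so there is no in-paper argument to compare against; your discrepancy argument---parametrizing seeds by $c=-\alpha(\alpha+b)$, using the strict monotonicity of this map on $(0,1)$ in both regimes $b\ge 1$ and $b\le -3$, and counting integers in the image interval to get a uniform $O(1/|b|)$ error---is the natural formalization of ``distributed almost uniformly'' and constitutes a complete, self-contained proof. The only points worth making explicit in a write-up are that the unique root of $x^2+bx+c$ in $(0,1)$ (guaranteed for $(b,c)\in\bar{S}$ by the sign change of $f$ at $0$ and $1$) coincides with $g^{-1}(c)$, and that the integer count in the interval between $g(p)$ and $g(q)$ is insensitive (up to the $O(1)$ you already absorb) to whether endpoints are included.
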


\begin{proposition}[\cite{SaitoPreprint}]\label{Thm:QuadraticNoncoincidence}
Let $b$ be an integer other than $0$, $-1$, and $-2$. Then, ${\mathbb
Q}(\alpha) \neq {\mathbb Q}(\beta)$ holds for all $\alpha, \beta
\in I_{b}$ with $\alpha \neq \beta$.
\end{proposition}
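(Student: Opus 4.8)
The plan is to reduce the equality of quadratic fields to an arithmetic statement about discriminants and then exploit the fact that, over the seed set $\bar{I}_b$, the admissible discriminants are packed into a very short window. I would first record that a root of $x^2+bx+c$ equals $(-b\pm\sqrt{D})/2$ with $D:=b^2-4c$, so that $\mathbb{Q}(\alpha)=\mathbb{Q}(\sqrt{D})$. Since both discriminants arising below are positive, $\mathbb{Q}(\sqrt{D_1})=\mathbb{Q}(\sqrt{D_2})$ holds if and only if $D_1$ and $D_2$ have the same square-free part. For $\alpha=\pi(b,c_1)$ and $\beta=\pi(b,c_2)$, because $\pi$ is a bijection and $b$ is fixed, $\alpha\neq\beta$ forces $c_1\neq c_2$, hence $D_1\neq D_2$; relabel so that $D_1>D_2$. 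The goal thus becomes: show that $D_1$ and $D_2$ cannot share a square-free part.

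Next I would pin down the range of the $D_i$. A direct computation from \eqref{eq:SeedSet} shows that for $b\ge 1$ one has $b^2<D_i<(b+2)^2$, while for $b\le -3$ (writing $a=|b|$) one has $(a-2)^2<D_i<a^2$; in either case $\sqrt{D_i}$ lies in an open interval of length $2$ whose endpoints are consecutive even integers apart by $2$. The only perfect square that could fall in such an interval is the square of the single integer strictly inside it, and that value of $D$ corresponds to a non-integer $c$ (namely $c=-(2b+1)/4$ or $c=(2a-1)/4$), so each $D_i$ is a non-square and its square-free part $d$ satisfies $d\ge 2$. This length-$2$ confinement is the crucial structural fact.

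The heart of the argument is then as follows. Suppose for contradiction that $D_1$ and $D_2$ have a common square-free part $d$, and write $D_1=d\,m^2$, $D_2=d\,n^2$ with $m>n\ge 1$. Because $\sqrt{D_1}=\sqrt{d}\,m$ and $\sqrt{D_2}=\sqrt{d}\,n$ both lie in an interval of length $2$, subtracting gives $\sqrt{d}\,(m-n)<2$; together with $d\ge 2$ this forces $m-n=1$. Feeding $m=n+1$ back into the interval bounds and subtracting once more yields $\sqrt{d}<2$, so $d\in\{2,3\}$. Finally I would compute $D_1-D_2=d(m^2-n^2)=d(2m-1)$, which is $\equiv 2\pmod{4}$ when $d=2$ and odd when $d=3$; but $D_1-D_2=4(c_2-c_1)$ is a multiple of $4$, a contradiction in both cases. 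Hence $D_1$ and $D_2$ have distinct square-free parts, and $\mathbb{Q}(\alpha)\neq\mathbb{Q}(\beta)$.

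I expect the main obstacle to be conceptual rather than computational: the naive estimate $\sqrt{D_1}-\sqrt{D_2}=\sqrt{d}\,(m-n)\ge\sqrt{2}$ does not by itself contradict the length-$2$ confinement, so one cannot close the argument by size alone. The decisive move is to extract the two sharper consequences $m-n=1$ and $d\le 3$ from the confinement, and only then finish with the parity/mod-$4$ divisibility of $D_1-D_2$. The remaining care is bookkeeping: verifying the interval bounds and the non-squareness of the $D_i$ uniformly across the two sign regimes $b\ge 1$ and $b\le -3$, which the substitution $a=|b|$ renders essentially identical.
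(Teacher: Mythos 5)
Your argument is correct and complete. Note first that the paper itself does not prove Proposition~\ref{Thm:QuadraticNoncoincidence}: it is imported from the cited preprint \cite{SaitoPreprint}, so there is no in-paper proof to compare against; your proof therefore stands as an independent, self-contained derivation. Checking the steps: the reduction to square-free parts of $D=b^2-4c$ is the standard criterion for equality of real quadratic fields; the interval bounds $b^2<D<(b+2)^2$ for $b\ge 1$ (from $b^2+4\le D\le b^2+4b$) and $(|b|-2)^2<D<|b|^2$ for $b\le -3$ are exactly right; the exclusion of the unique candidate square $(b+1)^2$ resp.\ $(|b|-1)^2$ via non-integrality of $c$ is correct, so each $D$ is a non-square and $d\ge 2$; and the finishing combination of $(m-n)\sqrt{d}<2$ (forcing $m-n=1$, then $d\in\{2,3\}$) with $D_1-D_2=4(c_2-c_1)\equiv 0\pmod 4$ versus $d(2m-1)\equiv 2\pmod 4$ or odd is a clean contradiction. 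You correctly identify that the length-$2$ confinement alone does not suffice and that the mod-$4$ divisibility of $D_1-D_2$ is the decisive extra ingredient. One trivial slip: the endpoints of the confining interval are integers differing by $2$, not necessarily \emph{even} integers; this does not affect anything, since the argument only uses that the open interval has integer endpoints and length $2$, so that it contains exactly one integer.
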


In the context of pseudorandom number generation, the property of
Proposition~\ref{Prop:QuadraticUniformity} is desirable for unbiased
sampling of seeds.
Also, by the property of Proposition~\ref{Thm:QuadraticNoncoincidence}, it is
guaranteed that the binary sequences derived from $I_{b}$ are highly
distinct from each other.
In fact, no identical sequences emerge even after applying to each
binary sequence any operation expressible as a rational map with
rational coefficients (except those mapping elements of $I_{b}$ to
rationals) (cf. \cite[Section 3]{SaitoChaos2016}).

\section{Newton's method}\label{sec:Newton'sMethod}

In this study, we aim to utilize Newton's method to generate binary
sequences identical to those generated by the true orbit generator of \cite{SaitoChaos2016},
despite the fact that Newton's method can only provide approximate
roots of a function.
The formula for the Newton method is given by
\begin{equation}\label{eq:NewtonMethod}
x_{i+1}=F(x_{i}):=x_{i}-\frac{f(x_{i})}{f'(x_{i})} \qquad (i=0,1,\dots),
\end{equation}
where $f(x):=x^2+bx+c$ with $(b, c) \in
\bar{S}$ (cf. the definition of $\bar{S}$ in Section \ref{sec:Preliminaries}).

For simplicity, we focus on the case where $x_{0}=1$ and
$b \ge 1$.
It is easy to see that $f'(x)>0$ and $f''(x)>0$ for $x \in
[\alpha,1]$, where $\alpha$ is a unique root of $f$ satisfying
$0<\alpha<1$.
Thus, $\left\{ x_i \right\}_{i=0,1,2,\dots}$ is a strictly monotone
decreasing sequence converging to $\alpha$.

Let $\epsilon_{i} = x_{i} - \alpha$ ($i=0,1,\dots$).
Note that $0<\epsilon_{i}<1$.
From \eqref{eq:NewtonMethod}, we see that
\[
\epsilon_{i+1}=\epsilon_{i}-\frac{f(\alpha+\epsilon_{i})}{f'(\alpha+\epsilon_{i})} \qquad (i=0,1,\dots).
\]
We have
\[
\epsilon_{i+1}=\frac{\epsilon_{i}^2}{2\alpha + b+2\epsilon_{i}} < \frac{\epsilon_{i}^2}{b} \qquad (i=0,1,\dots),
\]
which implies
\begin{align*}
  \log \epsilon_{i} &< 2^{i} \left(\log \epsilon_{0} - \log b \right) + \log b \qquad (i=1,2,\dots)\\
  &< 2^{i} \left(-\log b \right) + \log b.
\end{align*}
Thus, $\epsilon_{i} < b^{-2^{i}+1}$ holds irrespectively of $c$, and
this inequality also holds for $i=0$.
One can approximate each element of $I_b$
with $b > 1$ with an error less than $2^{-n}$ with $n \ge 0$, if the
number $i$ of iterations of the Newton method satisfies $b^{-2^{i}+1}
\le 2^{-n}$ (cf. definition \eqref{eq:SeedSet}).
That is, for such approximation, it suffices to choose $i$
satisfying
\begin{equation}\label{eq:QuadraticNewtonIterations}
i \ge \log_2 \left( \frac{n}{\log_2 b} + 1\right)
\end{equation}
irrespectively of $c$.

\section{Conditions for binary expansions to match or not}

We now establish conditions under which the first part of the binary
expansion of an irrational number matches that of its upper rational
approximation.  The results obtained in this section will later be
used to ensure that a binary sequence obtained from Newton's method is
identical to a binary sequence obtained from a true orbit generator.

It is known that a real number $x$ has a unique binary expansion
unless $x$ is a dyadic rational number other than 0.
We say that a rational number $x$ is dyadic if $x$ is of the form
$x=m/2^n$ for some integers $m, n$ with $n \ge 0$ (see, e.g.,
\cite{Ko}).
Any dyadic rational other than 0 has precisely two binary expansions,
one ending with all 0s and the other ending with all 1s.

Let $\alpha$ be an irrational number in the open unit interval
$\left(0,1\right)$, and let $x$ be a rational number in the open
interval $\left(\alpha,1\right)$.
Furthermore, assume that there exists an integer $n \ge 2$ such that
$x - \alpha < 2^{-n}$ holds.
In Proposition~\ref{Prop:UnmatchConditions} below, we give necessary
and sufficient conditions under which the first $N$ bits of the binary
expansions of $\alpha$ and $x$ match (or do not match), where $N$ is
an integer satisfying $1 \le N < n$.
We denote the unique binary expansion of $\alpha$ by
\begin{align*}
  \alpha &=\sum_{k=1}^{\infty} b_{k} 2^{-k} \quad \left(b_{k}=b_{k}(\alpha) \in \left\{0, 1\right\}\right)\\
  &= .b_{1}b_{2}\dots.
\end{align*}
Since $x$ is
rational, it may have two binary
expansions.
We denote by
\begin{align*}
  x = .\tilde{b}_{1}\tilde{b}_{2}\dots \quad \left(\tilde{b}_{k}=\tilde{b}_{k}(x) \in \left\{0, 1\right\}\right)
\end{align*}
the binary expansion of $x$ that does not end with all 1s.
Note that non-dyadic $x$ has a unique binary expansion, which does
not end with all 1s.
We define:
\begin{equation}\label{eq:definition_alpha_epsilon}
\alpha_{[1,n]} :=\sum_{k=1}^{n} b_{k} 2^{-k}, \qquad \alpha_{[n+1,\infty]} :=\sum_{k=n+1}^{\infty} b_{k} 2^{-k}, \qquad \epsilon := x - \alpha.
\end{equation}
Since $0<\alpha_{[n+1,\infty]}< 2^{-n}$, we have
\begin{equation}\label{eq:alpha_epsilon}
0<\alpha_{[n+1,\infty]} + \epsilon< 2^{-(n-1)}.
\end{equation}
Note that $\alpha_{[n+1,\infty]} + \epsilon$ $(= x -
\alpha_{[1,n]})$ is rational.
Let $.\tilde{c}_{1}\tilde{c}_{2}\dots$ be the binary expansion of
$\alpha_{[n+1,\infty]} + \epsilon$ that does not end with all 1s.

\begin{proposition}\label{Prop:UnmatchConditions}
Let $\alpha \in \left(0,1\right)$ be irrational with the
binary expansion $.b_{1}b_{2}\dots$.
Let $x \in \left(\alpha,1\right)$ be rational, and let
$.\tilde{b}_{1}\tilde{b}_{2}\dots$ be the binary expansion of $x$ that
does not end with all 1s.
Let $N$ and $n$ be integers satisfying $1 \le N < n$ and
$x - \alpha < 2^{-n}$.
Let $\alpha_{[n+1,\infty]}$ and $\epsilon$ be given by \eqref{eq:definition_alpha_epsilon}, and let
$.\tilde{c}_{1}\tilde{c}_{2}\dots$ be the binary expansion of
$\alpha_{[n+1,\infty]} + \epsilon$ that does not end with all 1s.
The following conditions are equivalent:
\begin{enumerate}[(i)]
\item $b_{1}b_{2} \dots b_{N} \neq \tilde{b}_{1}\tilde{b}_{2}\dots
  \tilde{b}_{N}$;
\item $b_{N+1}=b_{N+2}=\dots=b_{n}=1$ and $\tilde{c}_{n}=1$;
\item $\tilde{b}_{N+1}=\tilde{b}_{N+2}=\dots=\tilde{b}_{n}=0$ and $\tilde{c}_{n}=1$.
\end{enumerate}
\end{proposition}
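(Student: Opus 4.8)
The plan is to push everything down to integer arithmetic by scaling with $2^{n}$, so that the three bit-level conditions become statements about the integer $A:=\lfloor 2^{n}\alpha\rfloor$, its increment $A+1$, and carry propagation. First I would record the elementary floor identity: for any $t\in[0,1)$, the binary expansion of $t$ that does not end in all $1$s satisfies $\sum_{k=1}^{n}(\text{$k$th bit})\,2^{\,n-k}=\lfloor 2^{n}t\rfloor$ (a telescoping sum of the floor-based digit formula). Applying this to $\alpha$ and to $x$, both of which lie in $[0,1)$, the first $n$ bits $b_{1}\dots b_{n}$ of $\alpha$ encode $A:=\lfloor 2^{n}\alpha\rfloor$ and the first $n$ bits $\tilde b_{1}\dots\tilde b_{n}$ of $x$ encode $\lfloor 2^{n}x\rfloor$. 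The first $N$ bits are then just the top $N$ bits of these, i.e.\ $\lfloor A/2^{\,n-N}\rfloor$ and $\lfloor \lfloor 2^{n}x\rfloor/2^{\,n-N}\rfloor$, so condition (i) is precisely the statement that these two integers differ.

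Next I would identify $\tilde c_{n}$ as the relevant carry bit. Writing $y:=\alpha_{[n+1,\infty]}+\epsilon=x-\alpha_{[1,n]}$ from \eqref{eq:definition_alpha_epsilon}, we have $2^{n}x=A+2^{n}y$ with $A\in\mathbb{Z}$ and, by \eqref{eq:alpha_epsilon}, $2^{n}y\in(0,2)$. From $0<y<2^{-(n-1)}$ together with the no-all-$1$s convention, the bits $\tilde c_{1},\dots,\tilde c_{n-1}$ vanish, and a short check at the boundary $y=2^{-n}$ gives $\tilde c_{n}=1\iff y\ge 2^{-n}\iff 2^{n}y\ge 1$. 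Since $\lfloor 2^{n}x\rfloor=A+\lfloor 2^{n}y\rfloor$, this yields the dichotomy $\tilde c_{n}=0\Rightarrow\lfloor 2^{n}x\rfloor=A$ and $\tilde c_{n}=1\Rightarrow\lfloor 2^{n}x\rfloor=A+1$. In the case $\tilde c_{n}=0$ the first $n$ bits (hence the first $N$ bits) of $x$ and $\alpha$ coincide, so (i) fails; and since conditions (ii) and (iii) each contain the clause $\tilde c_{n}=1$, all three conditions fail simultaneously. Thus it suffices to work under $\tilde c_{n}=1$, i.e.\ $\lfloor 2^{n}x\rfloor=A+1$.

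Finally I would dispatch the remaining case by a carry-propagation argument. Under $\tilde c_{n}=1$, conditions (i), (ii), (iii) reduce respectively to: the top $N$ bits of $A$ and $A+1$ differ; the low $n-N$ bits of $A$, namely $b_{N+1},\dots,b_{n}$, are all $1$; and the low $n-N$ bits of $A+1$, namely $\tilde b_{N+1},\dots,\tilde b_{n}$, are all $0$. These three are equivalent by the elementary fact that incrementing an integer alters the bits above position $n-N$ \emph{iff} its low $n-N$ bits are all $1$, in which case those low bits all become $0$ (and conversely the low $n-N$ bits of $A+1$ are all $0$ iff $A\equiv-1\pmod{2^{\,n-N}}$). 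Chaining this equivalence with the dichotomy of the previous paragraph gives (i)$\Leftrightarrow$(ii)$\Leftrightarrow$(iii).

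I expect the main obstacle to be bookkeeping around the dyadic edge cases under the ``does not end in all $1$s'' convention: one must verify the floor identity for the first $n$ bits and pin down $\tilde c_{n}$ exactly at the threshold $y=2^{-n}$, so that the clean integer picture $\lfloor 2^{n}x\rfloor\in\{A,A+1\}$ holds with no exceptions. Here irrationality of $\alpha$ is essential, guaranteeing $0<\alpha_{[n+1,\infty]}<2^{-n}$ strictly (so that $A=\lfloor 2^{n}\alpha\rfloor$ and $y>0$), and the hypothesis $N<n$ is what makes the block $b_{N+1},\dots,b_{n}$ nonempty. Once this foundation is secure, the rest is the standard carry lemma and causes no difficulty.
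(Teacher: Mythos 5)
Your proposal is correct and is essentially the paper's own argument in rescaled form: your identity $\lfloor 2^{n}x\rfloor=A+\tilde c_{n}$ is exactly the paper's key relation $\sum_{k=1}^{n}\tilde b_{k}2^{-k}=\sum_{k=1}^{n}b_{k}2^{-k}+\tilde c_{n}2^{-n}$ multiplied by $2^{n}$, and your carry-propagation step corresponds to the paper's inequality and $\bmod\,2^{-N}$ arguments for (i)$\Leftrightarrow$(ii) and (iii)$\Rightarrow$(ii). The only difference is presentational (floor-function bookkeeping versus matching the tails of the two non-all-1s expansions), so no further comparison is needed.
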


\begin{proof}
The expansions $.\tilde{b}_{1}\tilde{b}_{2}\dots$ and
$.\tilde{c}_{1}\tilde{c}_{2}\dots$ do not end with all 1s, and since
\[
\sum_{k=1}^{\infty} \tilde{b}_{k} 2^{-k} = x = \alpha_{[1,n]} + \alpha_{[n+1,\infty]} + \epsilon = \sum_{k=1}^{n} b_{k} 2^{-k} + \sum_{k=1}^{\infty} \tilde{c}_{k} 2^{-k},
\]
we have $\tilde{b}_{k} = \tilde{c}_{k}$ for all $k \ge n+1$.
By \eqref{eq:alpha_epsilon}, we also have $\tilde{c}_{k} = 0$ for all
$k$ satisfying $1 \le k \le n-1$.
As a result, we have
\begin{equation}\label{eq:relation_tildeb_b_tildec}
\sum_{k=1}^{n} \tilde{b}_{k} 2^{-k} = \sum_{k=1}^{n} b_{k} 2^{-k} + \tilde{c}_{n} 2^{-n}.
\end{equation}
Thus, $b_{1}b_{2} \dots b_{N} \neq \tilde{b}_{1}\tilde{b}_{2}\dots
\tilde{b}_{N}$ if and only if
\[
\sum_{k=N+1}^{n} b_{k} 2^{-k} + \tilde{c}_{n} 2^{-n} \ge 2^{-N}.
\]
This proves the equivalence of (i) and (ii).

It is obvious that (ii) implies (iii).
Suppose that (iii) holds.
Then, \eqref{eq:relation_tildeb_b_tildec} gives
\[
\sum_{k=N+1}^{n} b_{k} 2^{-k} + 2^{-n} \equiv 0 \pmod{2^{-N}},
\]
which implies
\[
\sum_{k=N+1}^{n} b_{k} 2^{-k} \equiv  2^{-N} - 2^{-n} \pmod{2^{-N}}.
\]
Since
\[
0 \le \sum_{k=N+1}^{n} b_{k} 2^{-k} \le \sum_{k=N+1}^{n} 2^{-k} =  2^{-N} - 2^{-n},
\]
we obtain (ii).
\end{proof}

We remark that the proposition still holds if the condition on $\alpha
\in \left(0,1\right)$ is relaxed from irrational to not dyadic
rational.

Proposition~\ref{Prop:UnmatchConditions} has the following
corollary, which can be used to ensure that a binary sequence obtained from
Newton's method is identical to a binary sequence obtained from a true
orbit generator.
Let $n$ be an integer satisfying $n \ge 2$ and $x - \alpha < 2^{-n}$.
Now, suppose that there exists $k$ with $2 \le k \le n$ such that
$\tilde{b}_{k}=1$, and let $N = k-1$.
Then, Condition (iii) of Proposition~\ref{Prop:UnmatchConditions}
does not hold, which leads us to the following statement:

\begin{corollary}\label{Cor:QuadraticUniformity}
Let $\alpha$ and $x$ be as in Proposition~\ref{Prop:UnmatchConditions}.
Let $n$ be an integer satisfying $n \ge 2$ and $x - \alpha < 2^{-n}$.
If there exists $k$ with $2 \le k \le n$ such that $\tilde{b}_{k}=1$,
then $b_{1}b_{2} \dots b_{k-1} = \tilde{b}_{1}\tilde{b}_{2}\dots
  \tilde{b}_{k-1}$ holds.
\end{corollary}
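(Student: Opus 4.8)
The plan is to derive this directly from Proposition~\ref{Prop:UnmatchConditions} by choosing the index $N$ appropriately and then negating one of its three equivalent conditions. First I would set $N := k-1$ and check that the hypotheses of the proposition are satisfied for this choice. Since $2 \le k \le n$, we have $1 \le N = k-1 \le n-1 < n$, so the required bound $1 \le N < n$ holds, and the condition $x - \alpha < 2^{-n}$ is inherited verbatim from the corollary's hypotheses. Consequently all three conditions (i)--(iii) of Proposition~\ref{Prop:UnmatchConditions} are equivalent for this $N$.

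Next I would argue that Condition~(iii) must fail, choosing (iii) rather than (ii) because it is phrased in terms of the digits $\tilde{b}_k$ of $x$, which is precisely where the corollary's assumption lives. With $N = k-1$ we have $N+1 = k$, so the first half of (iii) asserts in particular that $\tilde{b}_{k} = 0$. This contradicts the standing assumption $\tilde{b}_{k} = 1$, so (iii) cannot hold. By the equivalence in Proposition~\ref{Prop:UnmatchConditions}, Condition~(i) then also fails; that is, its negation
\[
b_{1}b_{2}\dots b_{N} = \tilde{b}_{1}\tilde{b}_{2}\dots\tilde{b}_{N}
\]
holds. Substituting $N = k-1$ yields exactly $b_{1}b_{2}\dots b_{k-1} = \tilde{b}_{1}\tilde{b}_{2}\dots\tilde{b}_{k-1}$, which is the assertion of the corollary.

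There is essentially no hard step here: the whole content is the translation between the ``mismatch'' formulation of Proposition~\ref{Prop:UnmatchConditions} and the ``match'' statement of the corollary, obtained by logically negating the equivalence. The only points needing care are the off-by-one bookkeeping in the choice $N = k-1$ (so that the forbidden zero run required by (iii) begins exactly at index $k$, where we know a digit is $1$) and the verification that this $N$ lies in the admissible range $1 \le N < n$; both follow immediately from $2 \le k \le n$. No additional estimate or computation is required, since the analytic work relating the digits $b_k$, $\tilde{b}_k$, and $\tilde{c}_k$ under the bound $x - \alpha < 2^{-n}$ has already been absorbed into the proof of the proposition itself.
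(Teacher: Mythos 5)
Your proposal is correct and is essentially identical to the paper's own argument: the paper likewise sets $N=k-1$, observes that Condition~(iii) of Proposition~\ref{Prop:UnmatchConditions} fails because it would force $\tilde{b}_{k}=0$, and concludes that Condition~(i) fails as well. The range check $1 \le N < n$ and the off-by-one bookkeeping are handled exactly as in the paper.
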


\section{Algorithm}\label{sec:Algorithm}

We now explain our algorithm using Newton's method to generate binary
sequences identical to those generated by the true orbit generator of \cite{SaitoChaos2016}.
Since we only consider polynomials with integer coefficients for $f$
in \eqref{eq:NewtonMethod}, $F$ in the same equation is a rational
function with integer coefficients.
As noted in the previous section, any term of the sequence $\left\{x_i
\right\}_{i=0,1,2,\dots}$, defined by the recurrence relation
\eqref{eq:NewtonMethod} with the initial condition $x_{0}=1$, is
rational, and thus one can exactly evaluate $x_i$ using arbitrary
precision arithmetic on rational numbers or integers.
Unlike computing a true orbit of a linear or linear fractional map,
the cost required to compute a true orbit of $F$ is extremely high
(see \cite{SaitoChaos2015}).
However, in order to obtain a reasonably long pseudorandom binary sequence, it
is sufficient to iterate $F$ several dozen times
(for example, the number of iterations of $F$ required to generate a
pseudorandom sequence of length $2^{36} - 2$, which is discussed in
Subsection \ref{subsec:StatisticalTesting}, is 36).

Below, we will explain the algorithm using
arbitrary precision rational arithmetic
(see Section \ref{sec:ComputationalComplexity} for the use of arbitrary
precision integer arithmetic).
Note that $F$ is given by
\[
F(x)=\frac{x^2 - c}{2 x +b}.
\]

For the pseudorandom number generation, we first select a seed $(b, c)
\in \bar{S}$ with $b > 1$ and two positive integers $N$ and $n'$,
where $N$ is the length of a binary sequence to be generated and $n'$
is used as a margin to avoid additional iterations of $F$ in the later
stage (to be explained below).
The procedure for selecting a seed is called initialization (or
randomization) \cite[Section 2.4]{Sugita}, and one such procedure
is the following:
If one can choose any element of a set consisting of $k$ elements with
equal probability, then select one element from $I_{k}$ (or equivalently $\bar{I}_k$) with
equal probability and use it as a seed (see definition \eqref{eq:SeedSet}).
Note that $\left|I_{k}\right|=\left|\bar{I}_{k}\right|=k$ for $k>1$.

After the initialization, we compute the first terms of $\left\{x_i
\right\}_{i=0,1,2,\dots}$, defined by \eqref{eq:NewtonMethod} with
$x_{0}=1$.
Specifically, we put
\begin{align}\label{eq:Istar}
  n := N + n',\qquad
  i^{*} := \left\lceil \log_2 \left( \frac{n}{\log_2 b} + 1\right) \right\rceil,
\end{align}
with reference to \eqref{eq:QuadraticNewtonIterations}.
Here $\left\lceil x \right\rceil$ denotes the smallest integer which
is not less than a real number $x$.
Then, we exactly compute $x_{i^{*}} = F^{\left(i^{*}\right)}(1)$,
where $F^{\left(i\right)}$ denotes the $i$-fold composition of $F$.

Then, we compute a binary sequence $\tilde{b}_{1}\tilde{b}_{2} \dots
\tilde{b}_{n^{*}}$, namely the first $n^{*}$ bits of the binary
expansion of $x_{i^{*}}$, where we define an integer $n^{*}$ by
\[
n^{*} := \left\lfloor \log_2 b^{2^{i^{*}} -1} \right\rfloor,
\]
where we put $\left\lfloor x \right\rfloor := -\left\lceil -x
\right\rceil$.
Obviously, $n \le n^{*}$ holds.

Lastly, we attempt to output a pseudorandom sequence of length $N$,
identical to the one generated by the true orbit generator.
That is, we try to exactly extract $b_{1}b_{2} \dots b_{N}$, namely
the first $N$ bits of the binary expansion of the quadratic algebraic
integer $\alpha := \pi(b, c)$, from
$\tilde{b}_{1}\tilde{b}_{2} \dots \tilde{b}_{n^{*}}$ currently in
hand.
We define an integer $k^{*}$ by
\[
k^{*} := \max \left\{ k \in {\mathbb Z} ~\middle|~ k \le n^{*}, \tilde{b}_{k}=1 \right\}.
\]
If $N \le k^{*}-1$, then we output $\tilde{b}_{1}\tilde{b}_{2} \dots
\tilde{b}_{N}$, i.e., the first $N$ bits of
$\tilde{b}_{1}\tilde{b}_{2} \dots \tilde{b}_{n^{*}}$, as a
pseudorandom sequence.
Note that by Corollary~\ref{Cor:QuadraticUniformity},
$\tilde{b}_{1}\tilde{b}_{2} \dots \tilde{b}_{N}$ is guaranteed to be
identical to the $N$-bit binary sequence $b_{1}b_{2} \dots b_{N}$ obtained from the true orbit
generator with the seed $(b, c)$.
Otherwise, i.e., if $N \ge k^{*}$, we increment the value of
$i^{*}$ by $1$, and retry the exact computation of $x_{i^{*}}$.

Here we summarize the algorithm, implemented using the variable $x$
holding an arbitrary-precision rational number.

\begin{algorithm}[H]
    \caption{Generate pseudorandom binary sequence}
    \label{AlgorithmRational}
    \begin{algorithmic}[1]
    \REQUIRE two positive integers $N$, $n'$, and a seed $(b, c)$ with $b > 1$
    \ENSURE an $N$-bit binary sequence $b_{1}b_{2} \dots b_{N}$
    \STATE $n := N + n'$
    \STATE $i^{*} := \left\lceil \log_2 \left( \displaystyle\frac{n}{\log_2 b} + 1\right) \right\rceil$
    \STATE $x := 1$

    \FOR{$i:=0$ \TO $i^{*}-1$}
    \STATE $x :=\displaystyle\frac{x^2 - c}{2 x +b}$ \label{LineOfApplyingF}
    \STATE $i := i + 1$
    \ENDFOR

    \STATE $n^{*} := \left\lfloor \log_2 b^{2^{i} -1} \right\rfloor$
    \STATE $\textrm{\bf digits} := \textrm{AllocateMemoryOfSize}(n^{*})$
    \STATE Store the first $n^{*}$ bits of the binary expansion of $x$ in $\textrm{\bf digits}$
    \STATE $k^{*} := \textrm{LastIndexOf}(\textrm{\bf digits},1)$
    
    \IF{$N \le k^{*}-1$}
    \STATE Output the first $N$ bits of $\textrm{\bf digits}$
    \ELSE
    \STATE \textbf{goto} \ref{LineOfApplyingF} \label{Goto}
    \ENDIF
    \end{algorithmic}
\end{algorithm}

When
$\tilde{c}_{n^{*}}=0$ and $b_{N+1}=b_{N+2}=\dots=b_{n^{*}}=0$, or when
$\tilde{c}_{n^{*}}=1$ and $b_{N+1}=b_{N+2}=\dots=b_{n^{*}}=1$,
there is no $k$ in the
range $N+1 \le k \le n^{*}$ such that $\tilde{b}_{k}=1$, and
the goto-statement \ref{Goto} is executed.
We regard the bits $b_{1}, b_{2}, \dots$ in the binary expansion of
$\alpha$ as (a typical sample of) independently identically
distributed random variables.
Then, the probability of there being no $k$ in the range $N+1 \le k
\le n^{*}$ such that $\tilde{b}_{k}=1$ is independent of the value of
$\tilde{c}_{n^{*}}$ and is given by $2^{-(n^{*}-N)}$.
Therefore, by repeated execution of the goto-statement, the
probability of the goto-statement being newly executed becomes
arbitrarily small.
However, it is also possible to make the probability of the
goto-statement being executed arbitrarily small by taking $n'$ to be
large beforehand (note that $2^{-(n^{*}-N)} \le 2^{-n'}$).
Below, following the latter approach, we assume that one inputs a
sufficiently large positive integer as $n'$.
Accordingly, a part corresponding to lines 14 and 15 of Algorithm
\ref{AlgorithmRational} will be excluded from the algorithm considered
for computational complexity in Section
\ref{sec:ComputationalComplexity} and the program used for numerical
experiments in Section \ref{sec:Experiments}.

\section{Computational complexity}\label{sec:ComputationalComplexity}

In this section we explore the time complexity of the algorithm
discussed in the previous section, which uses Newton's method to
generate pseudorandom binary sequences.
We denote by $p_i$ and $q_i$ the numerator and denominator,
respectively, of each term $x_i$ in the rational sequence $\left\{x_i
\right\}_{i=0,1,2,\dots}$ defined by \eqref{eq:NewtonMethod} with
$x_{0}=1$.
If we do not consider reduction, the sequence
$\left\{\left(p_i,q_i\right) \right\}_{i=0,1,2,\dots}$ is given by
\begin{equation}\label{eq:RecurrenceRelationForPandQ}
\left\{
\begin{aligned}
&p_{0}=1,& &p_{i+1}=p_{i}^2 - c q_{i}^2 & &(i \ge 0),\\
&q_{0}=1,& &q_{i+1}=2 p_{i} q_{i} + b q_{i}^2 & &(i \ge 0).
\end{aligned}
\right.
\end{equation}
The algorithm described in Section \ref{sec:Algorithm}, when
implemented using two variables $p$ and $q$ that hold
arbitrary-precision integers, becomes as follows.

\begin{algorithm}[H]
    \caption{Generate pseudorandom binary sequence using arbitrary-precision integers (goto-free)}
    \label{AlgorithmIntegers}
    \begin{algorithmic}[1]
    \REQUIRE two positive integers $N$, $n'$, and a seed $(b, c)$ with $b > 1$
    \ENSURE an $N$-bit binary sequence $b_{1}b_{2} \dots b_{N}$
    \STATE $n := N + n'$
    \STATE $i^{*} := \left\lceil \log_2 \left( \displaystyle\frac{n}{\log_2 b} + 1\right) \right\rceil$
    \STATE $\left(p,q\right) := \left(1,1\right)$

    \FOR{$i:=0$ \TO $i^{*}-1$}  \label{LineStartLoop}
    \STATE $\left(p,q\right) := \left(p^2 - c q^2,2 p q + b q^2\right)$  \label{LineOfApplyingFforPQ}
    \STATE $i := i + 1$
    \ENDFOR \label{LineEndLoop}

    \STATE $n^{*} := \left\lfloor \log_2 b^{2^{i} -1} \right\rfloor$
    \STATE $\textrm{\bf digits} := \textrm{AllocateMemoryOfSize}(n^{*})$
    \STATE Store the first $n^{*}$ bits of the binary expansion of $p/q$ in $\textrm{\bf digits}$ \label{LineDivision}
    \STATE $k^{*} := \textrm{LastIndexOf}(\textrm{\bf digits},1)$
    
    \IF{$N \le k^{*}-1$}
    \STATE Output the first $N$ bits of $\textrm{\bf digits}$
    \ENDIF
    \end{algorithmic}
\end{algorithm}

Lines 3, 5, and 10 of Algorithm \ref{AlgorithmRational} are modified,
and lines 14 and 15 are deleted (see discussion at the end of Section
\ref{sec:Algorithm}).

In order to analyze the time complexity of the algorithm, we first
evaluate the lengths of the binary expansions of $p_i$ and $q_i$ in
\eqref{eq:RecurrenceRelationForPandQ}.
If $p_i$ and $q_i$ are both positive, then $p_{i+1}$ and $q_{i+1}$ are
also positive since $b > 0$ and $c < 0$.
Since $p_0$ and $q_0$ are both positive, $p_i$ and $q_i$ are positive
for all $i = 0,1,2,\dots$.
By \eqref{eq:RecurrenceRelationForPandQ} and $1 \le p_i \le q_i$, we
have
\begin{equation*}
b q_i^2 < q_{i+1} \le (b+2) q_i^2,
\end{equation*}
which, together with the initial condition, gives
\begin{equation}\label{eq:q_i}
b^{2^{i} -1} \le q_{i} \le (b+2)^{2^{i} -1},
\end{equation}
for every $i = 0,1,2,\dots$.
Since $q_i$ is a positive integer, the length of its binary expansion,
denoted by $\ell_i$, is given by
\[
\ell_i =
\left\lfloor \log_{2} q_i \right\rfloor +1.
\]
By \eqref{eq:q_i}, we have
\begin{equation}\label{eq:l_i}
  \left(2^{i} -1\right) \log_{2} b  < \ell_i \le \left(2^{i} -1\right) \log_{2} \left(b+2 \right) + 1 < 2^{i} \log_{2} \left(b+2 \right)
\end{equation}
for all $i = 0,1,2,\dots$.
We also see from \eqref{eq:RecurrenceRelationForPandQ} and $b \ge 2$
that $q_{i+1} \ge 4 q_i$ holds for all $i$, and thus the sequence
$\left\{ \ell_i \right\}_{i=0,1,2,\dots}$ is strictly monotonically
increasing.

We now analyze the time complexity of the algorithm, that is, we
establish an upper bound on the worst-case time required by the
algorithm.
We assume that bit operations, such as multiplication of two bits,
take a constant time.
Suppose that the time complexity of the multiplication of two
$\ell$-bit integers is expressed as $O\left(\ell g(\ell)\right)$,
where $O$ is Landau's Big O notation, and $g$ is a real-valued
function defined on the positive integers, with values always greater
than or equal to $1$.
We further assume that for any $c \ge 1$, there exists a constant
$c' \ge 1$ such that $g(c \ell) \le c'g(\ell)$ holds for all
sufficiently large $\ell$.
Various algorithms have been proposed for multiplying large numbers.
For example, the Karatsuba-Ofman algorithm \cite{Karatsuba} has time
complexity $O(\ell^{\log_2 3})$, the Toom-Cook or Toom-3 algorithm
\cite{Toom} has time complexity $O(\ell^{\log_3 5})$, and the
Sch\"{o}nhage-Strassen algorithm \cite{Schoenhage} has time complexity
$O(\ell \cdot \log\ell \cdot \log\log\ell)$.
The time complexities of these are expressed in the form of $O\left(\ell g(\ell)\right)$ above.
Note that if we follow the convention that $\log\ell$ is interpreted
as $\max \left\{\log\ell, 1\right\}$, then the term $\log\ell \cdot
\log\log\ell$ within the time complexity of the Sch\"{o}nhage-Strassen
algorithm always take values greater than or equal to $1$.
Below $c_1, c_2, \dots$ represent certain positive constants that do
not require further specification.
The time complexities for the addition and subtraction of two
$\ell$-bit integers, as well as the multiplication of an $\ell$-bit
integer with a constant integer, are $O(\ell)$.
Therefore, at line \ref{LineOfApplyingFforPQ} of Algorithm
\ref{AlgorithmIntegers}, the multiplications of two large integers
take the largest running time.
Since $g(\ell) \ge 1$ for all $\ell \ge 1$, the time spent in the loop
of lines \ref{LineStartLoop}-\ref{LineEndLoop} is bounded above by
\begin{equation}\label{eq:TimeforLoop}
c_1 \sum_{i=0}^{i^{*}-1} \ell_i g\left(\ell_i\right).
\end{equation}
Since $g(\ell_i) \le g(\ell_{i^{*}-1})$ and $\ell_i \le c_2 2^i$ (see
\eqref{eq:l_i}) for all $i=0,1,\dots, i^{*}-1$, we see that
\begin{equation*}
\sum_{i=0}^{i^{*}-1} \ell_i g\left(\ell_i\right) \le g\left(\ell_{i^{*}-1}\right)\sum_{i=0}^{i^{*}-1} \ell_i \le c_2 g\left(\ell_{i^{*}-1}\right) \left(2^{i^{*}}-1\right) \le c_3 g\left(\ell_{i^{*}-1}\right) 2^{i^{*}-1}.
\end{equation*}
By \eqref{eq:l_i}, we have $2^{i^{*}-1} <
\left(\ell_{i^{*}-1}/\log_{2} b\right) +1$, which yields
\begin{equation}\label{eq:Sumlgl}
\sum_{i=0}^{i^{*}-1} \ell_i g\left(\ell_i\right) \le c_4 \ell_{i^{*}-1} g\left(\ell_{i^{*}-1}\right) + c_3 g\left(\ell_{i^{*}-1}\right) \le c_5 \ell_{i^{*}-1} g\left(\ell_{i^{*}-1}\right).
\end{equation}
By \eqref{eq:Istar}, we have
$i^{*} -1 < \log_2 \left( n\left(\log_2 b\right)^{-1} + 1\right)$,
which combined with \eqref{eq:l_i} gives
\begin{equation*}
  \ell_{i^{*}-1} \le \left(2^{i^{*}-1} -1\right) \log_{2} \left(b+2 \right) + 1
  < \frac{\log_{2} \left(b+2 \right)}{\log_2 b} n +1 \le 2 n +1.
\end{equation*}
Thus, $\ell_{i^{*}-1} \le 2 n = 2 N + 2 n'$.
Assuming that $n'$ is kept constant (see discussion at the end of
Section \ref{sec:Algorithm}), we have
\begin{equation}\label{eq:listar-1}
\ell_{i^{*}-1} \le c_6 N
\end{equation}
for all sufficiently large $N$.
By \eqref{eq:Sumlgl}, \eqref{eq:listar-1}, and the assumption on $g$,
we see that the time spent in the loop of lines
\ref{LineStartLoop}-\ref{LineEndLoop}, bounded by
\eqref{eq:TimeforLoop}, is $O\left(N g(N)\right)$.
It is straightforward to see that the time spent at line
\ref{LineDivision} of Algorithm \ref{AlgorithmIntegers} is also
$O\left(N g(N)\right)$.
In fact, we can show, similarly to \eqref{eq:listar-1}, that 
\begin{equation*}
\ell_{i^{*}} \le c_7 N, \qquad n^{*} \le c_8 N
\end{equation*}
for all sufficiently large $N$.
It is known that multiple-precision division is linearly equivalent to
multiple-precision multiplication \cite{Brent}.
Thus, the generation of the first $n^{*}$ bits of the binary expansion
of $x_{i^{*}}=p_{i^{*}}/q_{i^{*}}$ takes $O\left(N g(N)\right)$ time
complexity.
Therefore, the time complexity of our algorithm to produce a
pseudorandom sequence of length $N$ is $O\left(N g(N)\right)$, the
same as that of the multiplication of two $N$-bit integers.
We remark that the algorithm is more efficient, at least for large
$N$, compared to the true orbit generator presented in
\cite{SaitoChaos2016} which has time complexity
$O(N^2)$ (see Appendix \ref{sec:AppendixComplexityTrueOrbitGenerators}).

\section{Experiments}\label{sec:Experiments}

In this section, we report the results of numerical experiments on
the proposed method, focusing on the computation time and memory usage
required for pseudorandom number generation, as well as the
statistical testing of a generated pseudorandom sequence.
The program used for the numerical experiments is written in {\tt C},
and it utilizes the GNU MP and MPFR libraries for arbitrary precision
arithmetic.
Table \ref{tab:system_specifications} shows the specifications of the
computer and the software used in the numerical experiments.

\begin{table}
\caption{System specifications}
\label{tab:system_specifications}
\centering
\begin{tabular}{|l|l|}
\hline
\textbf{Item}                 & \textbf{Specification} \\ \hline
CPU                           & Intel(R) Core(TM) i5-14600K \\ \hline
MEMORY                        & 192GB (DDR5 4200MHz)          \\ \hline
HDD/SSD                       & 1TB (NVMe M.2)             \\ \hline
OS                            & Ubuntu 22.04.4 LTS         \\ \hline
C Compiler                    & gcc 11.4.0                 \\ \hline
Arithmetic Library            & GMP 6.2.1, GNU MPFR 4.1.0  \\ \hline
Randomness Testing Utility    & TestU01 1.2.3              \\ \hline
\end{tabular}
\end{table}

\subsection{Computation time}

Figure \ref{fg:ComputationTime} shows the results of the computation time for pseudorandom
number generation.
Here, we set $b=2$, $c=-1$, $n'=1$, and $N=\lceil 2^k-1 \rceil$, where
we vary the value of $k$ by 1/4 in the range $7 \leq k < 27$.
Note that when $b=2$ and $n'=1$, the number of iterations of Newton's
method increases by one at $N$ having integer values of $k$
(cf. \eqref{eq:Istar}).
The computation time (elapsed time) is calculated by subtracting the
start time of the program's main function from the time at which the
output of the generated pseudorandom sequence is completed.
In the measurement of computation time, we execute the program 100
times for each $N$ and calculate the average computation time.
The cache is cleared before each run of the program to eliminate
the influence of previous executions.
For comparison, we also measure the computation time of the true orbit
generator of \cite{SaitoChaos2016} for $N$ with $k$ in the range $7
\le k \le 19.5$.
In the proposed method using Newton's method, the number of iterations
of Newton's method increases according to \eqref{eq:Istar} as $N$
increases, causing the computation time to increase in a stepwise
manner, as seen in Figure \ref{fg:ComputationTime}.

\begin{figure}
\begin{center}
\includegraphics[width=0.618\textwidth,clip]{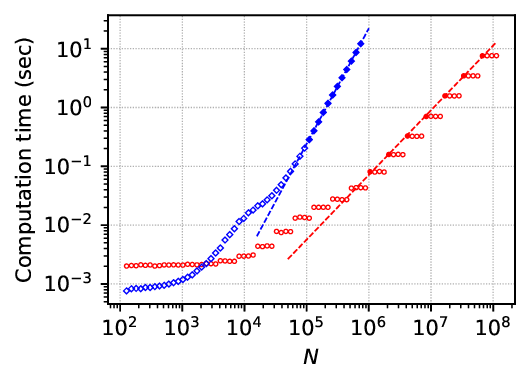}
\end{center}
\caption{\label{fg:ComputationTime}Results of computation times for
the proposed method using Newton's method (red circles) and the
quadratic true orbit generator (blue diamonds)}
\end{figure}

To evaluate the growth rate of computation time as $N$ increases, we
conducted a power approximation for the computation time of the
proposed method for $k = 20, 21, \dots, 26$ (represented by filled
markers in Figure \ref{fg:ComputationTime}) .
For the true orbit generator, we conducted a power approximation in
the range $16.75 \leq k \leq 19.5$.
Consequently, within the considered range of $N$, the computation time
for the proposed method grows like $N^{1.10}$, up to constant factors.
This is consistent with the fact that in GMP, the multiplication of
large integers is performed using an algorithm based on the
Sch\"{o}nhage-Strassen algorithm, and that the time complexities of
these algorithms are strongly sub-quadratic but super-linear
(cf. Section \ref{sec:ComputationalComplexity}).
On the other hand, the growth rate of computation time when using the
true orbit generator is approximately $N^{1.96}$,
similar to the time complexity $O(N^2)$ of the true orbit generator
(see Appendix \ref{sec:AppendixComplexityTrueOrbitGenerators}).
In any case, it is confirmed that the proposed method using Newton's
method significantly speeds up the computation time.

\subsection{Memory usage}

In Figure \ref{fg:MemoryUsage}, we show the results of the memory usage for
pseudorandom number generation.
The parameters are the same as in the previous subsection.
For measuring the memory usage, the Maximum Resident Set Size (MaxRSS)
was obtained using the {\tt getrusage} function at the point when a
pseudorandom sequence was output.
MaxRSS represents the maximum amount of memory used by a job on
physical memory (RAM).
Even for $k = 26.75$, the MaxRSS was approximately 386 MB, which is
sufficiently smaller than 192 GB available on the system.
Therefore, MaxRSS can be considered as the amount of memory used by
  the program during execution.
For the true orbit generator, data in the range $19.75 \leq k \leq 23.5$
is also included, but due to the longer computation time, the value of
one sample is used instead of the average of 100 samples (the green
diamonds in Figure \ref{fg:MemoryUsage}).

\begin{figure}
\begin{center}
\includegraphics[width=0.618\textwidth,clip]{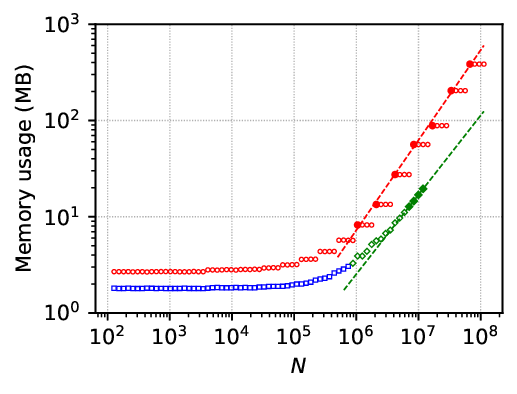}
\end{center}
\caption{\label{fg:MemoryUsage}Results of memory usages for
the proposed method using Newton's method (red circles) and the
quadratic true orbit generator (blue squares and green diamonds).
The red circles and blue squares represent the averages of 100 samples, whereas the
green diamonds represent the values obtained from a single sample.}
\end{figure}

As in the previous subsection,
a power approximation was
performed for the proposed method with $k = 20, 21, \dots, 26$, and
for the true orbit generator with $k = 22.75, 23, 23.25, 23.5$.
As a result, the growth rate of memory usage with respect to the
sequence length $N$ is approximately $N^{0.94}$ for the proposed
method and approximately $N^{0.82}$ for the true orbit generator.
We see from \eqref{eq:l_i} and \eqref{eq:b_n} that the space
complexity (memory usage) for generating a pseudorandom sequence of
length $N$ is $O(N)$ for both the proposed method and the true orbit
generator.
Note that when $b = 2$ and $n' = 1$, the number $i^{*}$ of iterations
of the Newton method for generating a pseudorandom sequence of length
$N = 2^k - 1$ ($k \in {\mathbb Z}$) using the proposed method is given
by $k + 1 = \log_2 (N + 1) + 1$ (cf. \eqref{eq:Istar}).
The memory usage estimated from the experiment is less than $O(N)$,
which is considered to be due to the influence of the memory used by
linked libraries.
As $N$ increases, this contribution relatively decreases, and it is
expected to approach $O(N)$.

\subsection{Statistical testing}\label{subsec:StatisticalTesting}

Finally, we report the result of evaluating the statistical
properties of a generated pseudorandom sequence.
For the evaluation, we generated the pseudorandom sequence of length
$N = 2^{36} - 2 = 68,719,476,734$ with $b = 2$, $c = -1$, and $n' =
1$ using the proposed method.
It should be emphasized that generating a pseudorandom sequence of
this length is practically impossible with the true orbit generator
due to the excessive computation time required.
Randomness tests were conducted using TestU01's RabbitFile \cite{L'Ecuyer2},
which consists of 26 randomness tests.
As a result, for all tests in RabbitFile, no suspicious
$p$-values---such as extremely large or small ones---were observed,
and the sequence successfully passed all 26 tests, demonstrating good
statistical properties.

\section*{Acknowledgements}

We thank Saul Schleimer for introducing us to relevant literature.
This research was supported by JSPS KAKENHI Grant Number JP22K12197.

\appendix

\section{Quadratic true orbit generator and its computational complexity}\label{sec:AppendixComplexityTrueOrbitGenerators}

In this appendix, we briefly explain the pseudorandom number generator
utilizing chaotic true orbits of the Bernoulli map on quadratic
algebraic integers proposed in \cite{SaitoChaos2016}, and then discuss
the time complexity of this generator.

The transformation $\bar{M}_B := \pi^{-1} \circ M_B \circ \pi$ on
$\bar{S}$ corresponding to the Bernoulli map $M_B(x):= 2x \pmod{1}$ on
${S}$ is given by (see Section \ref{sec:Preliminaries} for the
definitions of ${S}$, $\bar{S}$, and $\pi$):

\begin{subequations}\label{eq:Transformation}
\noindent
if $\textrm{sgn}\left(1+2b+4c\right) \neq \textrm{sgn} ~c$,
\begin{align*}
  \bar{M}_B: \left(
\begin{array}{c}
b\\
c
\end{array}
\right)
&\longmapsto
\left(
\begin{array}{cc}
2	&	0\\
0	&	4
\end{array}
\right)
\left(
\begin{array}{c}
b\\
c
\end{array}
\right);\\
\intertext{otherwise}
\bar{M}_B: \left(
\begin{array}{c}
b\\
c
\end{array}
\right)
&\longmapsto
\left(
\begin{array}{cc}
2	&	0\\
2	&	4
\end{array}
\right)
\left(
\begin{array}{c}
b\\
c
\end{array}
\right)
+
\left(
\begin{array}{c}
2\\
1
\end{array}
\right),
\end{align*}
\end{subequations}
see \cite{SaitoChaos2016}.
The true orbit generator of \cite{SaitoChaos2016} generates a
pseudorandom binary sequence $\left\{ \epsilon_n
\right\}_{n=0,1,\cdots, N-1}$ ($\epsilon_n \in \left\{0,\,1\right\}$)
of length $N$ for a given seed $(b_0, c_0) \in \bar{S}$ as follows:
It exactly computes a sequence of length $N$, $\left\{ (b_n, c_n)
\right\}_{n=0,1,\cdots, N-1}$, satisfying the recurrence relation
\begin{equation}\label{eq:BernoulliMap}
(b_{n+1}, c_{n+1})=\bar{M}_B (b_n, c_n) \qquad (n \ge 0).
\end{equation}
We call this computation {\bf true orbit computation} \cite{SaitoPhysD,SaitoChaos2015}.
Each $\epsilon_n$ ($0 \le n \le N-1$) in the pseudorandom binary
sequence is defined
as follows: if $\textrm{sgn}\left(1+2b_n+4c_n\right) \neq \textrm{sgn}
~c_n$, then $\epsilon_n :=0$; otherwise $\epsilon_n :=1$.

In order to study the time complexity of the generator, we evaluate
the lengths of the binary expansions of $b_n$ and $c_n$ in
\eqref{eq:BernoulliMap}.
It is easy to see that
\begin{equation}\label{eq:b_n}
2^{n} b_0 \le b_n \le 2^{n} (b_0 + 2) -2
\end{equation}
for every $n \ge 0$.
Assume now that $b_0 > 0$, as assumed in the main text.
Then, we see that $b_n > 0$ for all $n$.
Thus, the length of the binary expansion of $b_n$ is given by
$\left\lfloor \log_{2} b_n \right\rfloor +1$, and we see from
\eqref{eq:b_n} that it is $O(n)$.
The length of the binary expansion of $c_n$ is also $O(n)$ since $-b_n
\le c_n \le -1$ when $b_n > 0$.

The true orbit computation \eqref{eq:BernoulliMap} consists of two
basic steps: the evaluation of the sign of $1+2b_n+4c_n$, and the
application of a linear transformation to $(b_n, c_n)$.
Note that the sign of $c_n$ is negative, as seen above.
In addition, $\epsilon_n$ other than $\epsilon_{N-1}$ is obtained as a
byproduct of the sign evaluation.
Both the sign evaluation and the application of
a linear transformation involve two operations: Multiplication of an
arbitrary-precision integer by a constant integer, and addition of two
arbitrary-precision integers.
If the arbitrary-precision integers consist of $\ell$ bits, the
worst-case times to perform these operations are $O(\ell)$, as
described in Section \ref{sec:ComputationalComplexity} (we assume that
bit operations, such as multiplication of two bits, take a constant
time).

In order to generate a pseudorandom sequence of length $N$, the sign
evaluation is performed for every $(b_n, c_n)$ with $n=0,1,\dots,N-1$,
and the application of a linear transformation is performed for every
$(b_n, c_n)$ with $n=0,1,\dots,N-2$.
Since the lengths of the binary expansions of $b_n$ and $c_n$ are
$O(n)$, the (worst-case) time complexity of the true orbit generator
is $O(N^2)$.

\end{document}